\newtheorem{theorem}{Theorem}
\newcommand{\ignore}[1]{}
\DeclareMathOperator*{\argmax}{arg\,max}
\newcommand{\algo}[1]{\ensuremath{\mathsf{#1}}}
\newcommand{\defn}[1]{{\textbf{\emph{#1}}}}
 \newcommand{\ra}[1]{{}}
 \newcommand{\gs}[1]{{}}
 \newcommand{\bs}[1]{{}}
 \newcommand{\bsv}[1]{{}}
\begin{document}


\title{Don't Be Greedy: Leveraging Community Structure to Find High Quality Seed Sets for Influence Maximization\thanks{This research was supported by the National Science Foundation CAREER Award \#1452915 and National Science Foundation AitF Award \#1535912}}
\author[1]{Rico Angell}
\author[1]{Grant Schoenebeck}
\affil[1]{University of Michigan}

\date{\today}

\maketitle

\begin{abstract}
We consider the problem of maximizing the spread of influence in a social
  network by choosing a fixed number of initial seeds --- a central problem in the study of network cascades.
  The majority of existing work on this problem, formally referred to as the
  \emph{influence maximization problem}, is designed for  submodular cascades.
  Despite the empirical evidence that many cascades are non-submodular, little work has been done focusing on non-submodular influence maximization.


We propose a new heuristic for solving the influence maximization problem and show via simulations on real-world and synthetic networks that our algorithm outputs more influential seed sets than the state-of-the-art greedy algorithm in many natural cases, with average improvements of 7\% for submodular cascades, and 55\% for non-submodular cascades.  
Our heuristic uses a dynamic programming approach on a hierarchical decomposition of the social network to leverage the relation between the spread of cascades and the
community structure of social networks.   We verify the importance of network
structure by showing  the quality of the hierarchical decomposition impacts the
quality of seed set output by our algorithm.
We also present ``worst-case" theoretical results proving that in  certain
settings our algorithm outputs seed sets that are a factor of $\Theta(\sqrt{n})$
more influential than those of the greedy algorithm, where $n$ is the number of
nodes in the network.   Finally, we generalize our algorithm to a message passing
version that can be used to find seed sets that have at least as much influence as the dynamic programming algorithms.  



\end{abstract}

%
%
%
\section{Introduction}
\label{sec:intro}

A \emph{cascade} is a fundamental social network process in which a number of nodes, or agents, start with some property that they then may spread to neighbors.  Network structure has been shown  relevant for a wide array of real world cascade processes including the adoption of products~\cite{Brown87}, 
farming technology~\cite{ConleyU10}, medical practices~\cite{ColemanKM57}, participation in
microfinancing~\cite{BanerjeeCDJ13}, and the spread of information over social networks~\cite{LermanG10}.

How to place a limited number of initial seeds, in order to maximize the spread
of the resulting cascade, is a natural question known as   \textsc{InfluenceMaximization}~\cite{Domingos01,RichardsonD02,KempeKT03,KempeKT05,MosselR10}.   This problem requires as input a network, a cascade process, and the number of initial seeds.  For example, which students can most effectively be enrolled in an intervention to decrease student conflict at a school~\cite{PaluckSA2016}?

To study \textsc{InfluenceMaximization}, we first need to understand how cascades spread.  While many cascade models have been proposed~\cite{Arthur89,Morris00,Watts02}, they can be roughly divided into two categories: \emph{submodular} and \emph{non-submodular}.

In submodular cascade models, such as the Independent Cascade model defined in Section~\ref{sec:prelim}~\cite{KempeKT03,KempeKT05,MosselR10}, a node's marginal probability of becoming infected after a new neighbor is infected decreases when the number of previously infected neighbors increases~\cite{KempeKT03}.  
In non-submodular cascade models the marginal probability of being infected may increase as more neighbors are infected.  For example, in the Threshold model~\cite{Granovetter78}, each node has a threshold for the number of infected neighbors after which it too will become infected.  If a node has a threshold of 2, then the first infected neighbor has zero marginal impact, but the second infected neighbor causes this node to become infected with probability 1.  Unlike submodular cascades, non-submodular cascades require well-connected regions to spread~\cite{Centola10}.

For \textsc{InfluenceMaximization}  in submodular cascades,  a straightforward greedy algorithm  efficiently finds a seed set with influence at least a $(1- 1/e)$ fraction of the optimal; but for general non-submodular casacdes, it is NP-hard even to approximate \textsc{InfluenceMaximization} to within a $n^{1-\epsilon}$  factor of optimal~\cite{KempeKT03}.   

Unfortunately, empirical research shows that most cascades are non-submodular~\cite{RomeroMK11,BackstromHKL06,LeskovecAH06}, and in this case little is known about \textsc{InfluenceMaximization} other than worst-case hardness.  \textsc{InfluenceMaximization} becomes qualitatively different in the non-submodular setting.
In the submodular case, one should put as much distance between the $k$ initial
adopters as possible, lest they erode each other's effectiveness.  However, in
the non-submodular case, it may be advantageous  to place the initial adopters
close together to create synergy and yield more adoptions.  Thus, the intuition that it is better to saturate one market first, and then expand implicitly assumes non-submodular influence.  However, this synergy renders the problem intractable.

As we will illustrate, greedy approaches can perform poorly in these settings.
However, much of the work following Kempe et al. \cite{KempeKT03}, which
proposed the greedy algorithm, has attempted to make \emph{greedy approaches}
efficient and
scalable~\cite{chen2009efficient,ChenYZ10,lucier2015influence,cohen2014sketch,tang2014influence}.
New ideas seem necessary to design effective heuristics for non-submodular
\textsc{InfluenceMaximization}.



We observe that structural problems for networks---such as community
detection---are also, in general NP-complete, but many efficient heuristics
already exist~\cite{METIS,clausetNM2004finding}.  There are reasons to believe
that such problems are not intractable in cases likely to arise in
practice~\cite{AroraGSS12}.  This work asks whether we can design heuristics for
\textsc{InfluenceMaximization} that work well for both submodular and non-submodular cascades, and what new algorithmic techniques might efficiently find hidden synergies necessary to maximize influence.


\subsection{Contributions}
We  provide a new heuristic for solving  \textsc{InfluenceMaximization} designed to work for both submodular and non-submodular cascades.  Our algorithm
    takes as input not only a network, but a hierarchical decomposition of the
    network. 
    It then uses a dynamic programming technique to search for an
    influential seed set of nodes.  We provide the following results concerning our algorithm:
%
%
 \begin{enumerate}
  \item  We show theoretically that in certain cases, our algorithm outputs seed sets that are a factor of $\Theta(\sqrt{n})$ more influential than those of the state-of-the-art greedy algorithm, where $n$ is the number of nodes in the network.  This illustrates the intuition behind our algorithm, as well as the poor performance of greedy.
  \item  We empirically compare our algorithm to the greedy algorithm via simulations on real-world and synthetic networks for a variety of cascade models.  Our algorithm appears to do at least as well as greedy and substantially better for non-submodular cascades.  Our algorithm achieves average improvements of 7\% for submodular cascades and 55\% for non-submodular cascades, performing 266\% better in one exceptional case.
  \item   We verify the importance of network structure by showing that the quality of the hierarchical decomposition impacts the quality of our algorithm's output.
\end{enumerate}

Finally, we define a generalization of our algorithm to a ``message-passing" algorithm.  Because this algorithm is generalization, it finds seeds sets of strictly higher quality than the dynamic programming algorithm, but empirically has a longer running time.  We find that the results it returns are only marginally better than the dynamic program.  We believe that  it may be more amenable to speeding up with heuristics, but leave such studies to future work.

\subsection{Related Work}
Following the work of Kempe et al. \cite{KempeKT03}, which proposed the greedy algorithm, extensive work has constructed \emph{efficient and scalable} algorithms and heuristics \textsc{InfluenceMaximization}~\cite{chen2009efficient,ChenYZ10,nguyen2012influence,lucier2015influence,cohen2014sketch,tang2014influence}.

The heuristic algorithms presented in \cite{chen2009efficient,ChenYZ10} rely on
input parameters from the user that sacrifice accuracy for speed.  The authors
state that fine tuning the input parameters can make solving
\textsc{InfluenceMaximization} fast and accurate.
Borgs et al. provably show fast running times when the influence function is the independent cascade model \cite{borgs2012influence}.
Tang et al. extend this work to provide an algorithm that maintains
the same theoretical guarantees as the greedy algorithm presented in
\cite{KempeKT03} and is efficient in practice \cite{tang2014influence}.
Lucier et al. show how to parallelize (in a model based on Map Reduce) the subproblem of determining the influence of a particular seed \cite{lucier2015influence}.
Additional work has been done to speed up algorithms for solving
\textsc{InfluenceMaximization} by
providing techniques to efficiently compute the total influence of a seed set \cite{kimura2006tractable,cohen2014sketch}.

Leskovec et al. \cite{leskovec2007cost} consider the analogous problem of
effectively placing sensors in a network in order to effectively detect an
outbreak in the network. They present the algorithm CELF that uses a greedy
approach, but leverages the
submodularity of the cascade to reduce the amount of time it takes to evaluate
the spread of the cascade.  Moreover, CELF is built upon
by the work in \cite{goyal2011celf++,goyal2011simpath}, which present
modifications to CELF to make an even more cost effective solution to \textsc{InfluenceMaximization}.
Nguyen and Zheng present an algorithm based on belief propagation for
\textsc{InfluenceMaximization} \cite{nguyen2012influence}.  The algorithm works by systematically removing
edges until the resulting graph is a tree, and then running a belief propagation algorithm on the scaled-down
network. The authors of \cite{nguyen2012influence} show that the performance of
their algorithm is not substantially worse than that of the greedy algorithm.

In contrast to the aforementioned work, our goal is not to deliver an algorithm that is more
efficient and scalable, but rather to present an algorithm that finds higher quality seed sets.   We are unaware of any other work that claims to substantially out-perform the greedy algorithm with respect to the quality of solution.  Additionally, with the exception of~\cite{nguyen2012influence}, the prior work is based on a greedy-like approach.  Our algorithm uses a dynamic programming framework, and is fundamentally different.

Other variations of \textsc{InfluenceMaximization} have also been considered.
The works~\cite{chen2012time,liu2012time,liu2014influence} consider the problem
where the time the cascade takes to spread is constrained. Seeman and Singer study the special case where only a subset of the nodes in the network are available to be infected \cite{seeman2013adaptive}.
\textsc{InfluenceMaximization} has also been studied as a game between two different infectors \cite{Bharathi2007,GoyalK12}.

\section{Preliminaries}
\label{sec:prelim}
A real function $f$ on sets is \defn{submodular} if the marginal gain of from adding an element to a set $A$ is at least as large as the marginal gain from adding the same element to a superset $B$ of $A$.  Formally, $f$ is submodular if for all $A$, $B$, $u$ where $A \subseteq B$ we have $f(A\cup\{u\})- f(A)\geq f(B\cup\{u\})-f(B)$.

\paragraph{Cascade Model:} A cascade model is a triple $(G, F, S)$ where $G =
(V, E)$ is an unweighted graph; $F=\{f_v:\{0, 1\}^{|\Gamma(v)|} \rightarrow [0,
1]\}_{v \in V}$ is a collection of \defn{local influence functions}, where
$f_v$ takes in the set of infected neighbors of a node $v$, and produces a real
value which encodes the ``influence" of this set on $v$; and $S$ is the subset of the vertices that are initially infected.
The cascade will proceed in rounds.
In round 0, the set $S$ is infected and each of the remaining vertices is
assigned a threshold value $\theta_v \in [0, 1]$ drawn uniformly at random.
At each subsequent round, a vertex $v$ becomes infected if and only if $f_{v}(T) \geq  \theta_v$, where $T$ is the set of $v$'s infected neighbors.
We will require $f_v$ to be monotone for each $v$.

We denote the \defn{global influence function} as $\sigma(S)$ which is the \emph{expected} total number of infected vertices due to the
influence of the initial seed set $S$.

It can be shown that if $f_v$ is submodular for each $v$,then the global influence function  $\sigma$ is submodular too~\cite{MosselR10}.
Thus, we say that a model of cascade is \defn{submodular} if $f_v$ is submodular for each $v$, and is \defn{non-submodular} otherwise.


The same research that shows the $f$ usually fail to be submodular~\cite{RomeroMK11,BackstromHKL06,LeskovecAH06} shows that this submodularity fails in one particular way:  the second adopting neighbor is, on average, more influential than the first; and that after this point, each subsequent adopting neighbor's marginal influence decreases.  We call such functions \defn{2-submodular}.  Formally, $f$ is \defn{2-submodular}  if for all $A$, $B$, $A \subseteq B$, $|A|, |B| \geq 1$ and $u \not\in A, B$, we have $f(A\cup\{u\})- f(A)\geq f(B\cup\{u\})-f(B)$; and for $v \neq u$, we have $f(\{u\})- f(\emptyset)\leq f(\{u,v\})-f(\{v\})$.

Any nonzero influence function $f_v$ can be turned into a 2-submodular function by sufficiently decreasing the value of $f_v(\cdot)$ on singleton sets.

For any local influence function $f_v$, we define the \defn{$q$-deflated} version  $f^{q-defl}_v$ of $f_v$ as follows:
$$ f^{q-defl}_v(S) = \left\{ \begin{array}{cc} q \cdot f_v(S) & |S| = 1 \\ f_v(S) & \mbox{o.w.} \end{array} \right.$$

%


%

\paragraph{Specific Cascade Models:} The two popular cascade models studied in the \textsc{InfluenceMaximization}
literature are the Independent Cascade model (ICM) and the Linear
Threshold model (LTM). In the \defn{Independent Cascade model}, each newly infected node infects each
currently uninfected neighbor in the subsequent round with some fixed
probability $p$.  Thus, for all $v$, $$f^{ICM}_v(S) = 1 - (1 - p)^{|S|}.$$

In the \defn{Linear Threshold model}, each node has a threshold $\theta_v \in [0,1]$,
each of $v$'s neighbors $u$ has influence $b_{u,v}$ on $v$ such that $\sum_{u
\in \Gamma(v)} b_{u,v} \leq 1$, and $v$ becomes infected when the sum of the
influences of the infected neighbors meets or surpasses $v$'s threshold.

We define the \defn{Deflated Independent Cascade model} (DICM) which takes two parameters: $p, q \in [0, 1]$ to be the $q$-deflated version of the Independent Cascade model. 

In the \defn{S-Cascade model} (SCM) we have that $$f^{SCM}_d(S) =
\frac{\left(\frac{|S|}{2d}\right)^2}{\left(\frac{|S|}{2d}\right)^2 + \left(1 -
\frac{|S|}{d}\right)^2}.$$  This is a modified version of the Tullock Cost function~\cite{tullock1980} with power 2.

We note that the Independent Cascade model and the Linear Threshold model are
submodular, while the  $q$-Deflated Independent Cascade model (for $q < 1- p/2$)
and \emph{S}-Cascade are not.  Figure~\ref{fig:contagion_ref} illustrates the
local influence functions of the various cascade models.

\begin{figure}[h!]
  \centering
  \input{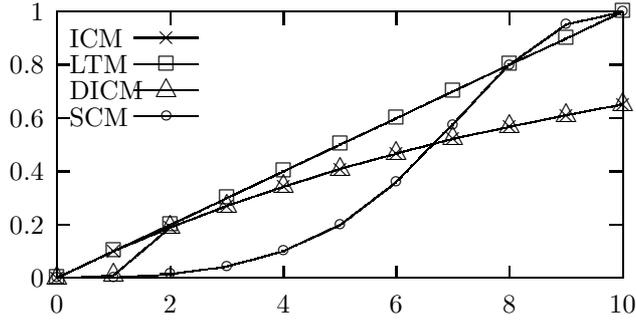}
  \caption{The local influence functions where the parameters of the ICM and
  DICM and the influence weights of LTM are all .1; and the vertex's degree in
  LTM and SCM is 10.}
  \label{fig:contagion_ref}
\end{figure}

\paragraph{Synthetic Network Model:} Most existing synthetic models fail to have meaningful asymmetry between nodes,
or significant community structure, or both.  Therefore, we design our own
synthetic network model.
\label{def:synnet}
    The \defn{directed $(d, \ell, t)$-hierarchical network model} creates a
    random network on $2^d$ nodes as follows:
    We create an edge-weighted complete binary tree of depth $d$, each leaf representing a vertex of the graph.  The weights are drawn i.i.d from a Binomial$(\ell, 1/2)$ distribution.
    Each node $v$ issues $t$ random edges, each generated via a random walk ---
    illustrated below.  Each random walk starts at $v$.  At each step in the
    walk, an outgoing edge is chosen proportional to its weight (we disallow
    exiting the node along the same edge that the walk arrived at the node).
    The walk terminates when it arrives at a leaf node.  If a terminating node is duplicated, we draw again, which keeps the graph simple.
\begin{center}
\begin{tikzpicture}[level distance=1.5cm,
                    level 1/.style={sibling distance=3cm},
                    level 2/.style={sibling distance=1.5cm}]
    \node[circle,draw](a){$a$}
     child{node[circle,draw](b){$b$}
       child{node[circle,draw](v1) {$v_1$} edge from parent node[right, draw=none] {$p_3$}}
       child{node[circle,draw](v2) {$v_2$} edge from parent node[right, draw=none] {$p_4$}}
       edge from parent node[right, draw=none] {$p_1$}
     }
     child{node[circle,draw](c){$c$}
       child{node[circle,draw](v3) {$v_3$} edge from parent node[left, draw=none] {$p_5$}}
       child{node[circle,draw](v4) {$v_4$} edge from parent node[right, draw=none] {$p_6$}}
       edge from parent node[right, draw=none] {$p_2$}
     };
    \draw[red, dashed, ultra thick, ->] (b.70) -- (a.-150);
    \draw[red, dashed, ultra thick, ->] (b.-80) -- (v2.132);
    \draw[red, ultra thick, ->] (v1.80) -- (b.-132);
\end{tikzpicture}
\end{center}

As $\ell$ grows larger, this approaches the hierarchical Kleinberg model~\cite{Kleinberg2002hierarchical}.  But for moderately sized $\ell$, there is a non-trivial amount of asymmetry introduced into the graph --- some subcommunities are more influential than others.

\paragraph{Hierarchical Decomposition:} We define a \textbf{hierarchical
decomposition} of a graph $G$ to be a rooted full binary tree $T = (V_T, E_T)$
where the leaves of $T$ correspond to the vertices of $G$.  Let $r \in V_T$ be
the root of $T$. For a tree node $v\in V_T$, define $\bm{T(v)}$ to be the subset of vertices in $G$ corresponding to the leaves of the subtree rooted at $v$.   Let the \defn{\textbf{height}} of $v \in V_T$ be defined as the length of the path to $v$'s deepest descendent.

We use the recently proposed cost function of Dasgupta~\cite{Dasgupta2016} to
evaluate the quality of a hierarchical decomposition.  Let $\mathrm{lca}_T(u, v)$ be the
least common ancestor of $u, v \in V$ in the tree $T$.  Then we define
$$Cost(T) = \sum_{\{u, v\} \in E} |T(\mathrm{lca}(u, v))|$$
which sums the number of leaves in the smallest subtree containing each edge.

\paragraph{Influence Maximization:} An \textbf{\textsc{InfluenceMaximization} Instance} consists of a graph $G = (V, E)$, an influence function $\sigma$, and an integer $k$.   Given an InfluenceMaximization Instance, the goal is to find a set $S$ of $k$ nodes as to maximize $\sigma(S)$.

The \defn{greedy algorithm}~\cite{KempeKT03} for an \textsc{InfluenceMaximization} Instance start with a tentative seed set $T = \emptyset$ and for $k$ rounds, simply adds $\argmax_{v \in V} \sigma(T \cup \{v\})$ to $T$.


\section{DPIM: Dynamic Programming Influence Maximization Algorithm}
\label{sec:algorithm}
The Dynamic Programming Influence Maximization Algorithm (\algo{DPIM}), formally
specified in Algorithm \ref{algo:dpim}, takes as input a graph $G$, and corresponding hierarchical decomposition $T$,
an integer $k$, and a global influence function $\sigma(\cdot)$
and outputs a subset of vertices $S \subseteq V$ such that $|S| = k$ and $S$
is a highly influential set of seeds.
\algo{DPIM} seeks to maximize the total influence of a fixed-sized seed set $S$ by performing dynamic programming upon $T$.

For each node $v \in T$, and each $i \in \{0,1, \ldots, \min(|T(v)|,k)\}$, the algorithm stores $A[v,i]$, a choice of $i$ seeds in $T(v)$ which seeks to maximize the total influence in $G$.
Starting at the leaves of the tree, and moving up level by level until reaching the root, DPIM processes each tree node.
For each leaf node $v \in T$, we store $A[v,0] = \emptyset$ and $A[v,1] = \{v\}$.  For
each internal node $v \in T$, which has children $v_L$ and $v_R$, we set
$A[v,i]= A[v_L, j] \cup A[v_R, i-j]$  where $j \in \{0, 1, \ldots, i\}$ is
selected as to maximize $\sigma(A[ v_L,j] \cup A[ v_R, i-j])$.

\begin{algorithm}[h!]\label{alg}
\SetAlgoNoLine
\KwIn{$G = (V, E), T = (V_T, E_T), \sigma(\cdot), k$} 
\KwOut{$S \subset V$ such that $|S| = k$ 
} 
Let $A[ \cdot , \cdot ] = V_T \times [k] \to 2^V$, such that $A[v_T,j]$ stores a
choice of $j$ seeds in $T(v_T)$.  \\
Let $h$ be the height $T$.\\
\For{each height $i = 0, 1, \ldots, h$} {
  \For{each node $v \in V_T$ with height $i$} {
    \eIf{$i = 0$}{
      $A[ v, 0] = \emptyset$ \\
      $A[ v, 1] = \{v\}$
    }{
      Let $v_L, v_R$ be the left and right children of $v$,
      respectively. \\
      \For{each $i = 0, 1, \ldots, \min\{|T(v)|, k\}$}{
        $j = \argmax\limits_{ j \in \{0, 1,\ldots, i\}} \sigma(A[ v_L,j] \cup A[ v_R, i-j])$\\
        $A[v, i] = A[v_L, j] \cup A[v_R, i-j]$
      }
    }
  }
}
\Return{$A[r, k]$}
\caption{\algo{DPIM}: Dynamic Programming Influence Maximization Algorithm}
\label{algo:dpim}
\end{algorithm}

The analysis of the running time for \algo{DPIM} is straightforward.
\begin{theorem}
  Given a graph $G=(V,E)$ with $|V|=n,|E|=m$, fixed positive integers $k,r$, and a hierarchical decomposition $T$, \algo{DPIM} calls the $\sigma(\cdot)$ oracle $O(nk^2)$ times.  
\end{theorem}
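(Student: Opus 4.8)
The plan is a direct counting argument over the nodes of the decomposition tree $T$: I would bound the number of calls to the $\sigma(\cdot)$ oracle incurred while processing each tree node, and then sum over all tree nodes.

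First I would recall the structure of the input. By definition $T$ is a rooted full binary tree whose leaves correspond exactly to the $n$ vertices of $G$; hence $T$ has $n$ leaves and $n-1$ internal nodes, for a total of $2n-1$ tree nodes. In \algo{DPIM}, the leaf case ($i=0$) performs only the assignments $A[v,0]=\emptyset$ and $A[v,1]=\{v\}$ and never invokes $\sigma(\cdot)$. Therefore every oracle call is attributable to one of the $n-1$ internal nodes, and it suffices to bound the oracle calls per internal node.

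Next I would bound the cost of a single internal node $v$ with children $v_L,v_R$. The inner loop ranges over $i\in\{0,1,\ldots,\min\{|T(v)|,k\}\}$, i.e.\ over at most $k+1$ values of $i$. For a fixed $i$, evaluating $j=\argmax_{j\in\{0,\ldots,i\}}\sigma(A[v_L,j]\cup A[v_R,i-j])$ requires calling $\sigma(\cdot)$ on the $i+1\le k+1$ candidate sets indexed by $j$. Hence the number of oracle calls at $v$ is at most $\sum_{i=0}^{k}(i+1)=\binom{k+2}{2}=O(k^2)$. The only point to be careful about is that for a fixed $i$ the $i+1$ candidate sets are in general distinct, so caching of $\sigma$-values does not improve this count; but since the naive count already meets the target bound, this is not an obstacle.

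Finally, summing over the $n-1$ internal nodes yields a total of $(n-1)\cdot O(k^2)=O(nk^2)$ oracle calls, which is the claimed bound. (If one prefers to state the result for a decomposition tree of bounded arity $r$ rather than a strictly binary one, the combine step at a node with $c\le r$ children is carried out by an auxiliary knapsack-style dynamic program over the children, costing $O(rk^2)$ oracle calls per node; since $r$ is a fixed constant, the overall bound is still $O(nk^2)$.) I do not expect any genuine obstacle here: the argument is an elementary accounting of the nested loops in Algorithm~\ref{algo:dpim}, and the only care required is in correctly counting the tree nodes and the per-node loop ranges.
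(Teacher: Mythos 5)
Your counting argument is correct and is essentially the paper's own proof: the paper likewise notes that each tree node incurs $O(k^2)$ oracle calls and that $T$ has $2n-1$ nodes, giving $O(nk^2)$ total. Your additional details (leaves make no calls, the per-node bound $\sum_{i=0}^{k}(i+1)=O(k^2)$) are just a more explicit rendering of the same accounting.
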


\begin{proof}
  Observe that, for each node in $T$, \algo{DPIM} makes $O(k^2)$ queries to
  $\sigma(\cdot)$.    The number of
  nodes in $T$ is exactly $2n-1$. 
   Hence, the number of oracle calls  in \algo{DPIM} is $O(nk^2)$.
\end{proof}

Note that this is a factor of $k$ more than the greedy algorithm, which requires only $O(nk)$ calls to the oracle.
The execution of a single query to $\sigma(\cdot)$ can be approximated by repeatedly, $r$ times, simulating the cascade process and returning the average number of infected vertices.  This can be done in time $O(mr)$ because simulating the cascade requires at most simulating
the cascade on each edge in $G$.  However, there are often techniques to speed up the oracle access beyond simply running the cascade~\cite{borgs2012influence}, but they are beyond the scope of this work.

\subsection{Motivation:}



When \algo{DPIM} analyzes a node  $v_T \in V_T$, and the two subtrees of $v_T$ are disjoint.  If the inputs from the subtrees are optimal, then the output at of the tree will be optimal as well.  That is, \algo{DPIM} correctly decides how many nodes to allocate to each disconnected component.
Although this extreme case of disconnected subtrees may be rare, we expect the
performance to degrade gently as the number
of inter-community edges increases.
Thus, \algo{DPIM} finds the globally influential vertices by combining the knowledge it gained about each of the subtrees.  Because we expect that the community structure is important to how the cascade spreads~\cite{CentolaM07}, if the subtrees are dense ``community-like" structures, we expect our algorithm to do well.  Moreover, Community structure is hierarchical in real social networks~\cite{clausetNM2004finding}, which dynamic programming can exploit.  
However, in the same setting where you have a collection of disjoint graphs, greedy may be
suboptimal (especially if the cascade is non-submodular).  It may spread the seeds among different communities, where the optimal thing to do is saturate one community.

We illustrate this is two ways: first empirically on a random graph, and then with a theorem.

\subsubsection{Empirical Motivating Illustration}

\begin{figure*}[thp]
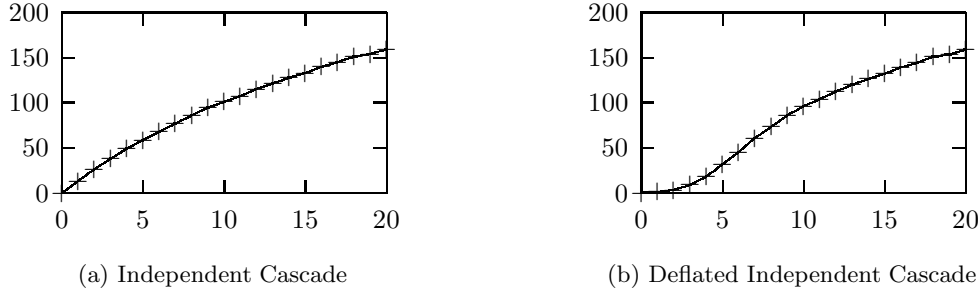

  \begin{subfigure}{.5\textwidth}
    \centering
    \input{fig/global-sub}
    \caption{Independent Cascade}
    \label{fig:sfig1}
  \end{subfigure}%
  \begin{subfigure}{.5\textwidth}
    \centering
    \input{fig/global-non-sub}
    \caption{Deflated Independent Cascade}
    \label{fig:sfig2}
  \end{subfigure}
  \caption{\label{fig:Gnp} Global influence of a random seed set on both the Independent Cascade and 0.1-Deflated Independent Cascade on a random graph with 10,000 vertices and 50,000 edges.
   The x-axis is the size of the initial seed set and the y-axis is the
total number of infections at the end to the simulation.  Notice the Global non-submodularity of Deflated Independent Cascade. }
\end{figure*}

Consider Figure~\ref{fig:Gnp} which illustrates both an Independent Cascade and 0.1-Deflated Independent Cascade on a random graph.   Notice that for this very natural graph, a small change in the local influence function creates a large change the global influence function.  This is especially true when only infecting a few seeds.  After infecting 10 seeds, there is barely any difference (101.4 versus 96.1).  However, in the Independent Cascade, the marginal impact of adding one seed, 13.3, is always greater than the average marginal impact of the first ten seeds, 10.1.  But in the Deflated Independent Cascade, the marginal impact of adding one seed, 1.3,  is not a good predictor of the average marginal impact of the first ten seeds, 9.6, which is over seven times that of adding just one node!  Adding 10 seeds in this graph can create a synergy that is not available when just considering one seed.

The greedy algorithm, which only ever considers one node at a time, may have trouble finding these synergies.   In particular, if this random graph is only part of a larger network where there are several other regions where adding one seed yields 2 infections in expectation, the greedy algorithm will never explore adding nodes to this section of the network where synergies are possible.

In contrast our  \algo{DPIM} specifically considers what happens if it adds a large number of nodes to a particular ``community-like" region.  These dense regions are exactly where we expect synergies to occur.

The greedy algorithm can make bad initial choices that are hard to mitigate later.
If the greedy algorithm adds too many seeds to graph locations that seem initially promising, but fail to provide synergies, it may not be able to catch up to the  \algo{DPIM} even if it sometimes happens upon these synergies.  In contract, \algo{DPIM} can later reallocate nodes away from a subtree that is not performing well.





%
%
\label{sec:theoretical}
\subsubsection{Theoretical Comparison in ``Worst-case" scenario}

 \label{sec:compare}
In this section, we illustrate a particular setting where Algorithm~\ref{algo:dpim} provably has considerably better performance than the greedy algorithm proposed in \cite{KempeKT03}.  This makes the aforementioned intuition rigorous.

\begin{theorem} \label{thm:worstcase}
There exist \textsc{InfluenceMaximization} instances in which the influence of the seed set output by Algorithm~\ref{algo:dpim} is a factor of $\Theta(\sqrt{N})$ larger than the influence of the seed set output by the greedy algorithm, where $N$ is the number of verices in the graph.
\end{theorem}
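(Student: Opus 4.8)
\noindent The plan is to exhibit a single family of instances, together with a natural hierarchical decomposition, on which greedy is provably lured into a low-value region of the graph while \algo{DPIM} recovers a hidden synergy. Concretely, take the vertex set to be a disjoint union $V = C \sqcup D_1 \sqcup \cdots \sqcup D_k$, where $C$ is a clique on $m$ vertices and each $D_j = \{a_j, b_j\}$ is a single edge (a ``decoy''). Use a deterministic threshold cascade --- a degenerate case of the model of Section~\ref{sec:prelim} in which each $f_v$ is $\{0,1\}$-valued, so the random thresholds play no role almost surely, and in the spirit of Granovetter's Threshold model: every $v \in C$ has $f_v(T) = \mathbf{1}[\,|T| \ge k\,]$, while every decoy endpoint has threshold $1$, i.e.\ $f_{a_j}(\{b_j\}) = f_{b_j}(\{a_j\}) = 1$. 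Thus seeding any $k$ vertices of $C$ ignites all of $C$, seeding fewer than $k$ vertices of $C$ infects exactly those vertices, and seeding one endpoint of a decoy infects both (a second seed in the same decoy is wasted). Set the budget to $k = \lceil \sqrt{N} \rceil$ and $m = N - 2k = \Theta(N)$, so $|V| = N$.

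\noindent First I would analyze the greedy algorithm. The key invariant is that at every one of its $k$ steps, seeding a previously untouched decoy has marginal gain exactly $2$, while seeding a vertex of $C$ has marginal gain exactly $1$ (since $C$ holds fewer than $k-1$ seeds throughout greedy's run, ignition is never triggered) and re-seeding a used decoy has marginal gain $0$. Hence, under \emph{any} tie-breaking rule, greedy places its $k$ seeds in $k$ distinct decoys, never accumulates enough seeds in $C$, and outputs a set of influence exactly $2k = \Theta(\sqrt N)$.

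\noindent Next I would analyze \algo{DPIM} on the decomposition $T$ whose root splits $V$ into the subtree over $C$ and the subtree over $D_1 \cup \cdots \cup D_k$ (internal structure arbitrary) --- precisely the decomposition a community-detection routine would return, since $C$ is the one dense cluster. The crucial observation is that the clique gadget is \emph{permutation-invariant}: any placement of $k$ seeds inside $C$ ignites all of $C$. Therefore, writing $v_C$ for the root of the $C$-subtree with children $v_L, v_R$, every combination $A[v_L, j] \cup A[v_R, k-j]$ that \algo{DPIM} considers at budget $k$ has $\sigma$-value $m$, so $A[v_C, k]$ has $\sigma$-value $m$ no matter which representative sets the DP happened to store at lower levels (this sidesteps the usual worry that \algo{DPIM} keeps only one set per budget). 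At the global root, the split putting all $k$ seeds on the $C$-side then attains $\sigma = m$, and this beats every other split (a split leaving $j \ge 1$ seeds on the decoy side yields at most $(k-j) + 2j = k + j \le 2k < m$ for large $N$). Hence \algo{DPIM} outputs a set of influence $m = \Theta(N)$.

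\noindent Combining the two bounds gives ratio $m/(2k) = (N - 2k)/(2k) = \Theta(\sqrt N)$, and since the construction is symmetric the bound is tight on both sides, yielding $\Theta(\sqrt N)$ as claimed. (If a connected graph or an instance closer to a standard model such as DICM or SCM is desired, the same skeleton works after replacing the sharp thresholds by a steep S-shaped $f_v$ and gluing the pieces with a few low-weight edges; the dynamics analysis becomes quantitative but the picture is unchanged.) I expect the main obstacle to be the greedy lower bound: making the ``a fresh decoy strictly dominates every alternative'' claim airtight for all tie-breakings and all reachable intermediate states, and ruling out any unintended ignition of $C$ or cross-gadget interaction along greedy's trajectory. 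The \algo{DPIM} side is comparatively easy once permutation-invariance of the ignition gadget is isolated, and the parameter balancing $k \asymp \sqrt{N}$ is routine.
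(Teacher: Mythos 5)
Your construction is correct and follows essentially the same route as the paper's proof: a disconnected instance in which a clique can only be ignited by several coordinated seeds, together with decoy components whose single-seed marginal gain strictly dominates that of a clique seed, so greedy spends its entire budget on decoys while \algo{DPIM}, given a decomposition that isolates the clique, places all its seeds there. The differences are only parametric --- the paper fixes $k=2$, uses threshold-$2$ dynamics with a small nonzero single-neighbor infection probability, and uses two large stars as decoys (greedy gets $O(1)$, \algo{DPIM} gets $\Theta(\sqrt{N})$), whereas you scale the budget and clique threshold as $\Theta(\sqrt{N})$ with deterministic dynamics and edge decoys (greedy gets $\Theta(\sqrt{N})$, \algo{DPIM} gets $\Theta(N)$); both instantiations yield the same $\Theta(\sqrt{N})$ ratio.
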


%

\begin{proof}
Consider the graph $G=(V,E)$ with $N=2n^2+n+2$ vertices consisting of the following three connected components:
\begin{itemize}
  \item two stars of size $n^2+1$, and
  \item one clique of size $n$.
\end{itemize}
Consider the complex contagion model with influence function $F=\{f_v\}$ such that
\begin{itemize}
  \item $f_v(S_v)=\frac1{n^2}$ if $|S_v|=1$, and
  \item $f_v(S_v)=1$ if $|S_v|\geq2$,
\end{itemize}
In the \textsc{InfluenceMaximization} instance with parameter $k=2$, we evaluate the performance of both the greedy algorithm and Algorithm~\ref{algo:dpim}.

In the greedy algorithm, the vertex with maximum marginal influence is selected.
The two obvious potential targets are the either one of the centers of the two stars, or any vertex in the clique.
If the center vertex of a star is infected, each of the remaining $n^2$ vertices will be infected with probability $\frac1{n^2}$, so the expected number of the infected vertices is $1+n^2\times\frac1{n^2}=2$.
On the other hand, if we infect any vertex in the clique, the infected vertex will not infect any other vertices in the clique with probability $\left(1-\frac1{n^2}\right)^{n-1}$, and it will infect at least one other vertex with probability $1-\left(1-\frac1{n^2}\right)^{n-1}$ in which case all the remaining vertices in the clique will be infected.
Therefore, in the case we choose a vertex in the clique, the expected number of vertices infected is
\begin{align*}
\mathbb{E}[\sigma(S)]&=1\times\left(1-\frac1{n^2}\right)^{n-1}+n\times\left(1-\left(1-\frac1{n^2}\right)^{n-1}\right)\\
&=1+(n-1)\left(1-\left(1-\frac1{n^2}\right)^{n-1}\right)\\
&=1+(n-1)\frac1{n^2}\left(\sum_{i=0}^{n-2}\left(1-\frac1{n^2}\right)^i \right)\\
&<1+(n-1)\frac1{n^2}\cdot(n-1)\\
&<2.
\end{align*}
Thus, infecting the center of a star has higher marginal influence, so the greedy algorithm will choose the two centers of the two stars to infect, and an expected total number of 4 vertices will be infected.

On the other hand, assuming that the hierarchical decomposition $T$ contains a subtree $T(v)$ consisting only of the clique (which it should as it is disconnected from the rest of the graph):
Algorithm~\ref{algo:dpim} will choose 2 vertices from the clique causing a total number of $n$ infected vertices, which is also the optimal strategy.
To see this, note that Algorithm~\ref{algo:dpim} will process $v$, and at this point will compute $\sigma(S)$ where $S$ contains two nodes from the clique.   By the nature of Algorithm~\ref{algo:dpim}, this solution, or a better one, will be considered (inductively) for each vertex that has $v$ as a descendent.  Thus the output seed set is at least as influential as placing two seeds in the clique, which has influence $n = \Theta(\sqrt{N})$.
Consequently, the performance of Algorithm~\ref{algo:dpim} is better than the performance of the greedy algorithm by a factor of
$$\frac{n}4=\Theta(n)=\Theta(\sqrt N).$$
\end{proof}
Notice that the structure of the graph in the example above is related to some
of the common structures found in social networks, which contain many different
communities with different internal structures.

\section{Experimental Results}
\label{sec:emp_results}
\subsection{Experimental Setup}
  \label{ssec:exp_setup}

We execute \algo{DPIM} and the greedy algorithm from \cite{KempeKT03} on a variety of
networks and cascades to test the relative quality of solutions.


\subsubsection{Cascade Models}
We adopt the two common submodular cascade models from the literature: the linear threshold model and the independent cascade model, defined in Section~\ref{sec:prelim}, and two non-submodular cascades:

\begin{enumerate}[I)]
\item \underline{\emph{Independent Cascade} (IC):} We uniformly assign the
probability $p = 1\%$, thus $v$ with $\ell$ infected neighbors is infected with
probability $1 - (0.99)^{\ell}$.
\item \underline{\emph{Linear Threshold} (LT):} For each node $v$, we assign each of
$u \in \Gamma(v)$ to have $1/|\Gamma(v)|$ influence on $v$.
\item \underline{\emph{Deflated Independent Cascade} (DIC):} We uniformly assign the
probability $p = 1\%$, thus $v$ with $\ell = 1$ infected neighbors is infected with probabilty
$0.001$ and with $\ell \geq 2$ infected neighbors is infected with probability $1 - (0.99)^{\ell}$.
\item \underline{\emph{S-Cascade model} (SCM):} The influence
on any given node $v$ is
\[\frac{(x/2)^2}{(x/2)^2 + (1 - x)^2},\]
where $x$ is the fraction of $v$'s neighbors that are infected.
\end{enumerate}

Both of these algorithms require access to an oracle for $\sigma(\cdot)$, which is also required to evaluate the effectiveness of the algorithms.  To implement this oracle, we simulate the cascade 100 times, resampling the randomness for the cascade each time (using pseudorandomness from the standard C++ library) and return the average number of infections.

\subsubsection{Networks}

We use two real-world networks (from \cite{snapnets}) and two synthetic networks, summarized in Table \ref{table:nets}.
In the arXiv collaboration network (ca-GrQc), the vertices are authors of e-print scientific articles and edges represent coauthorship relations.
The ego-Facebook network is largest such network provided by \cite{snapnets}.  This network denotes the facebook friendship ties from a single person's (ego's) set of friends.  The ego vertex has been removed.
Furthermore, we generate two synthetic networks by first sampling from directed $(d, \ell,
t)$-hierarchical network model using parameters $(10, 50, 50)$ and $(11, 50,
50)$ (synthetic-1 \& synthetic-2, resp.), and then making the graph simple and
undirected in the natural way.  
  \begin{table}[htb]
    \centering
    \begin{tabular}{|l|c|c|}
    \hline
      \textbf{Name} & \textbf{Nodes} & \textbf{Edges} \\
    \hline
      synthetic-1 & 1,024 & 51,200 \\
    \hline
      synthetic-2 & 2,048 & 102,400 \\
    \hline
      ca-GrQc & 5,276 & 28,827 \\
    \hline
      ego-Facebook & 1,034 & 53,498 \\
    \hline
    \end{tabular}
    \caption{Networks used to evaluate the effectiveness of our algorithm.}
    \label{table:nets}
  \end{table}
\subsubsection{Algorithms for Hierarchical Decomposition}
  Lastly, in order to evaluate our algorithm, we present 4 algorithms for
  generating a hierarchical decomposition of any network. The algorithms
  we used in our simulations are implemented as follows:

\begin{enumerate}[I)]
\item \underline{\emph{Random Pair}:} Each node starts in
its own partition, and partitions are joined randomly until all of the nodes are
contained in one partition.
\item \underline{\emph{Random Edge}:} Each node starts in
its own partition, and partitions are joined by contracting a random edge between
partitions.  If no edges remain between the partitions, partitions are merged randomly until all
of the nodes are contained in one partition.
\item \underline{\emph{Jaccard Similarity}:} Each node starts in
its own partition, and pairs of partitions $(A,B)$, for $A,B \subset V$ are joined based on
which pair maximizes
\[\frac{|\Gamma(A) \cap \Gamma(B)|}{|\Gamma(A) \cup \Gamma(B)|},\]
where $\Gamma(X \subset V) = \bigcup_{v \in X} \Gamma(v)$.
\item \underline{\emph{METIS-based}:} The whole network starts as one partition; using METIS~\cite{METIS}, partitions are
recursively divided into two partitions  until each partition contains only a single node.
\end{enumerate}

\subsection{Algorithm Evaluation}
\label{ssec:algo_eval}

\subsubsection{Performance of DPIM}
  \begin{figure*}[htbp]
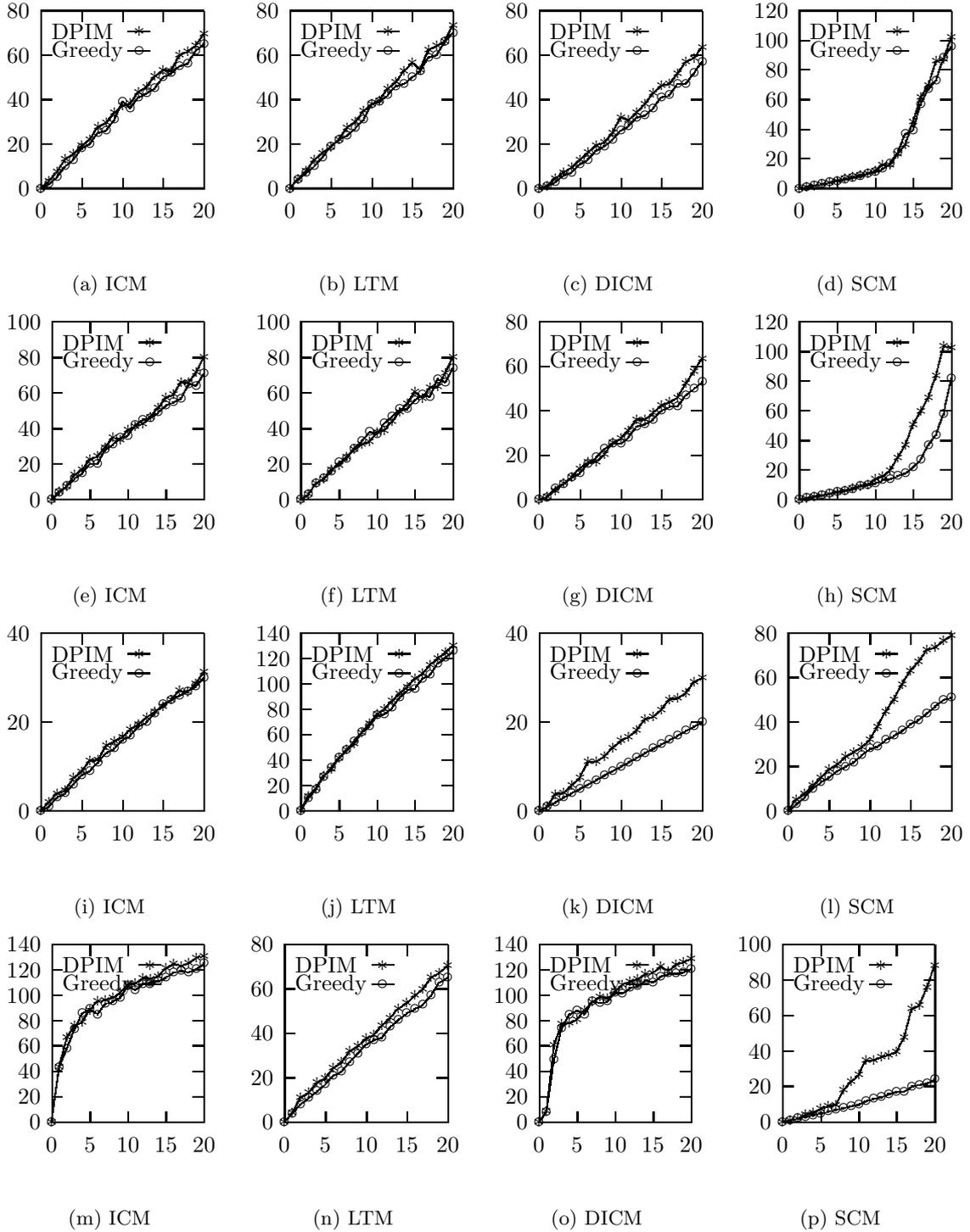

    \begin{subfigure}{0.24\textwidth}
      \centering
      \input{fig/IndependentCascade.syn10d50c50a-txt}
      \caption{ICM}
    \end{subfigure}
    \begin{subfigure}{0.24\textwidth}
      \centering
      \input{fig/LinearThreshold.syn10d50c50a-txt}
      \caption{LTM}
    \end{subfigure}
    \begin{subfigure}{0.24\textwidth}
      \centering
      \input{fig/DeflatedIndependentCascade.syn10d50c50a-txt}
      \caption{DICM}
    \end{subfigure}
    \begin{subfigure}{0.24\textwidth}
      \centering
      \input{fig/GoyalContagion.syn10d50c50a-txt}
      \caption{SCM}
    \end{subfigure}
    \begin{subfigure}{0.24\textwidth}
      \centering
      \input{fig/IndependentCascade.syn11d50c50a-txt}
      \caption{ICM}
    \end{subfigure}
    \begin{subfigure}{0.24\textwidth}
      \centering
      \input{fig/LinearThreshold.syn11d50c50a-txt}
      \caption{LTM}
    \end{subfigure}
    \begin{subfigure}{0.24\textwidth}
      \centering
      \input{fig/DeflatedIndependentCascade.syn11d50c50a-txt}
      \caption{DICM}
    \end{subfigure}
    \begin{subfigure}{0.24\textwidth}
      \centering
      \input{fig/GoyalContagion.syn11d50c50a-txt}
      \caption{SCM}
    \end{subfigure}
    \begin{subfigure}{0.24\textwidth}
      \centering
      \input{fig/IndependentCascade.ca-GrQc-txt}
      \caption{ICM}
    \end{subfigure}
    \begin{subfigure}{0.24\textwidth}
      \centering
      \input{fig/LinearThreshold.ca-GrQc-txt}
      \caption{LTM}
    \end{subfigure}
    \begin{subfigure}{0.24\textwidth}
      \centering
      \input{fig/DeflatedIndependentCascade.ca-GrQc-txt}
      \caption{DICM}
    \end{subfigure}
    \begin{subfigure}{0.24\textwidth}
      \centering
      \input{fig/GoyalContagion.ca-GrQc-txt}
      \caption{SCM}
    \end{subfigure}
    \begin{subfigure}{0.24\textwidth}
      \centering
      \input{fig/IndependentCascade.107-edges-txt}
      \caption{ICM}
    \end{subfigure}
    \begin{subfigure}{0.24\textwidth}
      \centering
      \input{fig/LinearThreshold.107-edges-txt}
      \caption{LTM}
    \end{subfigure}
    \begin{subfigure}{0.24\textwidth}
      \centering
      \input{fig/DeflatedIndependentCascade.107-edges-txt}
      \caption{DICM}
    \end{subfigure}
    \begin{subfigure}{0.24\textwidth}
      \centering
      \input{fig/GoyalContagion.107-edges-txt}
      \caption{SCM}
    \end{subfigure}

    \caption{\textbf{Comparison of performance: \algo{DPIM} vs. Greedy}. The
    rows from top to bottom correspond to synthetic-1, synthetic-2, ca-GrQc,
    and ego-Facebook, respectively. For
    each plot, the x-axis is $k$ and the y-axis is the number of total infections
    at the end of the cascade.}
    \label{fig:eval_s2}
  \end{figure*}

 The results of the simulations we ran are shown in Figure \ref{fig:eval_s2}.  For each
 execution of \algo{DPIM}, we used the METIS-based hierarchical decomposition
 algorithm to construct a hierarchical decomposition of the network.

Considering cascades across all four networks with seed set size 20, \algo{DPIM} increases influence on average by 8\% for ICM,  6\% for LTM, 22\%  for DICM, and  88\%  for SCM.
%
%
%
 Surprisingly, \algo{DPIM} performs marginally better than the greedy algorithm
 even for submodular cascades.  As predicted, when the cascade is non-submodular, \algo{DPIM} outperforms the greedy algorithm by a significant amount.  However, gains were more impressive for
 synthetic-2, ca-GrQc, and ego-Facebook --- including an 266\% increase in
 influence for the SCM cascade on the ego-Facebook network --- than for
 synthetic-1, where we see only marginal improvement even when the cascade is
 non-submodular.  Table \ref{table:eval} contains the approximated expected
 total influence values for each simulations rounded to the nearest integer.

 In addition, we present the running times of each simulation (both \algo{DPIM}
 and Greedy) in Table \ref{table:times}. Theoretically, the
 greedy algorithm queries $\sigma(\cdot)$ $O(nk)$ times, and \algo{DPIM} queries $\sigma(\cdot)$
 $O(nk^2)$ times.  Despite this,
 the empirical results show that $\algo{DPIM}$ is roughly a small constant factor
 slower than the greedy algorithm, and occasionally much faster.

 \begin{table*}[htbp]
 \label{multiprogram}
 \centering
 \begin{tabular}{l|c|c|c|c|c|c|c|c|}
  & \multicolumn{2}{c|}{\textbf{synthetic-1}}
  & \multicolumn{2}{c|}{\textbf{synthetic-2}}
  & \multicolumn{2}{c|}{\textbf{ca-GrQc}}
  & \multicolumn{2}{c|}{\textbf{ego-Facebook}} \\
   \cline{2-9}
& Greedy & \algo{DPIM}  & Greedy & \algo{DPIM}
& Greedy & \algo{DPIM} & Greedy & \algo{DPIM} \\ \hline
  \hline
    \textbf{ICM} &  23,720 & 90,945  & 49,885 & 162,892 & 39,641 & 350,006 &
    275,516 & 758,843  \\
  \hline
    \textbf{LTM} & 24,121 & 92,786 &  52,018 & 165,369 & 25,626 & 56,156 & 308,637
    & 83,584 \\
  \hline
    \textbf{DICM} & 18,536 & 69,993 & 42,176 & 134,744 & 6,108 & 26,672 & 266,859 & 641,962 \\
  \hline
    \textbf{SCM} & 13,676 & 36,765 & 23,749 & 69,737 & 10,274 & 46,196 & 38,279 & 49,892 \\
  \hline
   \end{tabular}
 \caption{Running times in seconds of each of the simulations.}
 \label{table:times}
 \end{table*}

 \begin{table*}[htbp]
 \label{multiprogram}
 \centering
 \begin{tabular}{l|c|c|c|c|c|c|c|c|}
  & \multicolumn{2}{c|}{\textbf{synthetic-1}}
  & \multicolumn{2}{c|}{\textbf{synthetic-2}}
  & \multicolumn{2}{c|}{\textbf{ca-GrQc}}
  & \multicolumn{2}{c|}{\textbf{ego-Facebook}} \\
   \cline{2-9}
& Greedy & \algo{DPIM}  & Greedy & \algo{DPIM}
& Greedy & \algo{DPIM} & Greedy & \algo{DPIM} \\ \hline
    \textbf{ICM} & 65 & 70 & 71 & 80 & 30 & 31 & 125 & 131 \\
  \hline
    \textbf{LTM} & 70 & 74 & 74 & 81 & 126 & 130 & 65 & 70 \\
  \hline
    \textbf{DICM} & 57 & 64 & 53 & 64 & 20 & 30 & 121 & 129 \\
  \hline
    \textbf{SCM} & 96 & 102 & 82 & 103 & 51 & 79 & 24 & 88 \\
  \hline
   \end{tabular}
 \caption{Expected total influence of the final seed sets of size 20 chosen by
 both algorithms for each of the simulations (rounded to the nearest integer).}
 \label{table:eval}
 \end{table*}

\subsubsection{Comparison of Hierarchical Decomposition Algorithms}
For each network, we tested how \algo{DPIM} performed over the various
hierarchical decomposition algorithms with SCM as the cascade model.
In addition, we recorded the cost of each
hierarchical decomposition (see Section~\ref{sec:prelim} for details).  The details of the performance of our algorithm are
shown in Figure \ref{fig:hd_eval} and the cost of each of the hierarchical
decompositions is noted in Table \ref{table:costs}. Observe that \algo{DPIM} performs
significantly better with the hierarchical
decompositions that have lower costs.

Despite performing better when the hierarchical decomposition better
represents the community structure of the network, there seem to be diminishing
returns beyond a certain cost for each network. The METIS-based
algorithm and the Jaccard Similarity algorithm produce hierarchical
decompositions that differ in cost by a relatively large amount, but the
difference in performance of our algorithm between the two different hierarchical
decompositions is small.

Since a lower cost valuation implies that the hierarchical
decomposition better represents the structure of the network, this result provides
evidence that our algorithm is leveraging the community
structure provided by the hierarchical decomposition.

  \begin{table*}[h!]
    \centering
    \begin{tabular}{l|c|c|c|c|}
       & \textbf{synthetic-1} & \textbf{synthetic-2} &
      \textbf{ca-GrQc} & \textbf{ego-Facebook}\\
    \hline
      \textbf{Random Pair} & 35,063,945 & 139,590,514, & 99,792,540 & 36,387,204 \\
    \hline
      \textbf{Random Edge} & 34,084,502 & 133,759,197 & 57,800,479 & 31,092,978 \\
    \hline
      \textbf{Jaccard Similarity} & 22,508,941 & 88,531,982 & 27,695,152 & 13,455,920 \\
    \hline
      \textbf{METIS-based} & 5,267,974 & 12,185,692 & 18,425,481 & 8,968,620 \\
    \hline
    \end{tabular}
    \caption{Cost of the hierarchical decompositions produced by each algorithm
    for each network.}
    \label{table:costs}
  \end{table*}

  \begin{figure*}[h!]
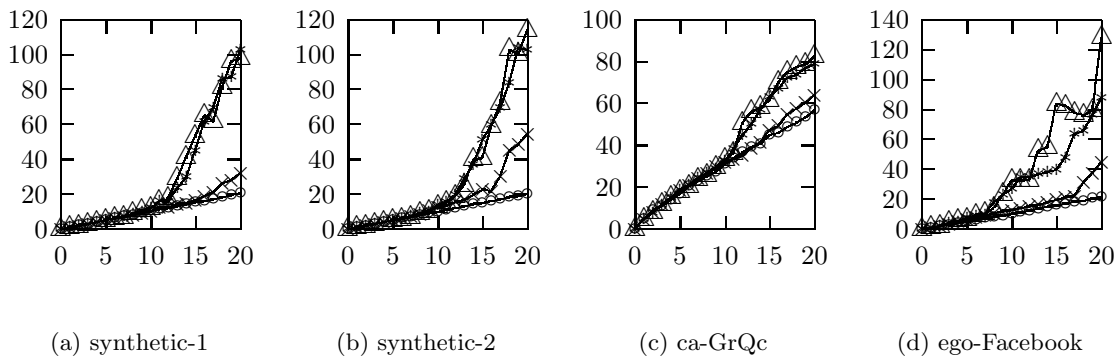

    \begin{subfigure}{0.24\textwidth}
      \centering
      \input{fig/GoyalContagion.syn10d50c50a-txt.HD}
      \caption{synthetic-1}
    \end{subfigure}
    \begin{subfigure}{0.24\textwidth}
      \centering
      \input{fig/GoyalContagion.syn11d50c50a-txt.HD}
      \caption{synthetic-2}
    \end{subfigure}
    \begin{subfigure}{0.24\textwidth}
      \centering
      \input{fig/GoyalContagion.ca-GrQc-txt.HD}
      \caption{ca-GrQc}
    \end{subfigure}
    \begin{subfigure}{0.24\textwidth}
      \centering
      \input{fig/GoyalContagion.107-edges-txt.HD}
      \caption{ego-Facebook}
    \end{subfigure}

    \caption{\textbf{Comparison of Hierarchical Decomposition Algorithms}. For
    each plot, the x-axis is $k$ and the y-axis is the number of total infections
    at the end of the cascade. In each plot,  the
    circle is Random Pair, the cross is Random Edge, the triangle is Jaccard
    Similarity, and the asterisk is METIS-based. Each seed set was choosen by \algo{DPIM} with each of the
    respective hierarchical decomposition algorithm as input, and the cascade
    used was SCM.}
    \label{fig:hd_eval}
  \end{figure*}

\section{Message Passing Algorithm}
\label{sec:message-passing}
In this section, we describe a Message Passing Algorithm (\algo{MPA}), formally presented in Algorithm \ref{algo:mpa}, which is a generalization of  \algo{DPIM}.  This algorithm provides a new perspective on  \algo{DPIM}, and can be made to perform better at the cost of additional run time.

\paragraph{Directional Subtrees and Their Recursive Decomposition:} First, we must present notation and ideas that we will use to formally describe
\algo{MPA}.   For a node $v_T \in V_T$, we define $L(v_T), R(v_T),$ and $U(v_T)$ to be the left child, right child, and parent (up), respectively.  We use $D$ as a placeholder for an element in $\{L, R, U\}$ (left, right, up directions), so that $D(v_T)$ refers to either $L(v_T), R(v_T),$ or $U(v_T)$.   Let $\mathbb{D} =
\{(L,R), (L,U), (R,U) \}$.

Furthermore, for node, $v_T \in
V_T$ and a direction  $D \in \{L, R, U\}$ we define the backward direction $B^{(v_T)}_D \in \{L, R, U\}$ to be the direction one takes from $D(v_T)$ in order to return to $v_T$ so that $B^{(v_T)}_D(D(v_T)) = v_T$.  
We naturally define $\neg
B^{(v_T)}_D \in \mathbb{D}$ such that neither of the
coordinates of $\neg B^{(v_T)}_D$ are
$B^{(v_T)}_D$.

Lastly, for each node $v_T \in V_T$ we define three \emph{directional subtrees}:
\emph{left} --- $T_L(v_T) = T(L(v_T))$, \emph{right} --- $T_R(v_T) = T(R(v_T))$,
and \emph{up} --- $T_U(v_T) = V \setminus (T(v_T))$; which partition the
vertices of $G$ because $T(v_T) = T_L(v_T) \cup T(R(v_T))$.  A key observation
is that each of these three subtrees, can be decomposed into smaller directions
subtrees of $L(v_T)$, $R(v_T)$, and $U(v_T)$ respectively.  For example, for $D
\in \{L, R\}$ we have that $D(v_T) = T_L(D(v_T)) \cup T_R(D(v_T))$.  And while
$v_T$ is not the root of $V_T$, we have $T_U(v_T) = \bigcup_{D \in \neg B^{v_T}_U} T_D(U(v_T))$.

\paragraph{Data Structure:} \algo{MPA} will sequentially, locally update a data-structure, which we describe
now. Let $A: V_T \times \mathbb{D} \times [k] \rightarrow [k] \times [k]$
where both output coordinates of $A(v_T, (D_1, D_2), \ell)$ are non-negative integers that sum to at most $\ell$.
Intuitively, $A(v_T, (D_1, D_2), \ell)$ answers:  how
should  $\ell$ seeds should be split between $T_{D_1}(v_T)$ and
$T_{D_2}(v_T)$? The answer is that we should put  $A(v_T,
(D_1, D_2), \ell)_i$ seeds in $T_{D_i}(v_T)$ for
each $i \in \{1, 2\}$. For the root vertex $r \in V_T$, which has no parent, we insist that  $A(r,
(D,U), \ell)_2 = 0$ for $D \in \{L, R \}$.


\paragraph{Obtaining a Seed Set:}Once we have the data-structure $A$, we define a function that answers the more natural question:
Let $C: V_T \times \mathbb{D} \times [k] \rightarrow 2^V$ where
$C(v_T,  (D_1,D_2), \ell)$ is a subset of $
T_{D_1}(v_T) \cup T_{D_2}(v_T)$ of at most $\ell$
vertices.  Intuitively, the question that $C$ answers is: where should we allocate
$\ell$ seed vertices in $T_{D_1}(v_T) \cup T_{D_2}(v_T)$?
Given an instance of $A$, $C(v_T, (D_1, D_2), \ell)$  recursively follows the advice of $A$ in
placing seeds in $T_{D_1}(v_T)$ and $T_{D_2}(v_T)$.  Intuitively, we can do this using $A$ and the recursive tree decomposition.  Formally, we
divide the definition of $C(v_T, (D_1, D_2), \ell)$ into two
cases:

\begin{enumerate}[I)]
\item $(D_1, D_2) = (L, R)$: If $v_T$ is a leaf node, then
$$C(v_T, (L,R), \ell) = \left\{ \begin{array}{cc} T(v_T) &  \ell \geq 1 \\
\emptyset & \ell = 0 \end{array}.\right.$$
Otherwise,
\begin{align*}
C(v_T, (L,R), \ell) = & C(L(v_T), (L,R), A[v_T, (L,R), \ell]_1) \; \cup \\ &C(R(v_T),
(L,R), A[v_T, (L,R), \ell]_2).
\end{align*}
\item $(D_1, D_2) = (D, U)$  for $D \in \{L, R \}$: If $v_T = r$, the root of
$T$, then
$$C(v_T, (D,U), \ell) = C(D(v_T), (L,R), \ell).$$
If $v_T$ is a leaf node, then it is undefined (or just $\emptyset$).
Otherwise,
\begin{align*}
C(v_T, (D,U), l) = & C(D(v_T), (L, R), A(v_T, (D, U), l)_1) \; \cup \\
& C(U(v_T), \neg B^{(v_T)}_D, A(v_T, (D, U), l)_2).
\end{align*}

\end{enumerate}

%
%

%

\noindent Note that $C(r, (L,R), k)$ denotes where to place $k$ initial seeds over all of $V$.

\paragraph{Local Update Subroutine:}

Now that we have defined $A$ and shown how it defines initial seeds sets,  we
will define a way to locally update $A$ to improve the allocation.  Intuitively, we update  $A(v_T, (D_1, D_2), \ell)$ by trying all possible outputs whose
coordinates sum to $\ell$ and returning the best.  Let $D_3$ be the sole
element of ${\{ L, R, U\} \setminus\{D_1, D_2}\}$.  We
will first allocate $k - \ell$ vertices to $T_{D_3} (v_T)$ according to $C$.  Note that this tree can be decomposed into two directional subtrees of $D_3(v_T)$ upon which $C$ is defined.
Next, we try the different divisions between $T_{D_1}(v_T)$ and
$T_{D_2}(v_T)$ for the remaining $\ell$ vertices. This procedure is
formally defined in Algorithm \ref{algo:update}.
%
%

\begin{algorithm}[h]\label{alg}
\SetAlgoNoLine
\KwIn{$I = (G = (V, E), T = (V_T, E_T), \sigma(\cdot), k), A, v_T,
(D_1, D_2) \in \mathbb{D}$} 
\KwOut{An updated instance of $A$} 
Let $D_3$ be the sole element of $\{ L, R, U\} \setminus\{D_1,
D_2\}$.\\
%

 \For{each $i = 0, 1, \ldots, k$}{
        $j = \argmax\limits_{ j \in \{0, 1,\ldots, i\}}
        \sigma\left(
        C\left(D_1(v_T),\neg R^{(v_T)}_{D_1}, j\right) \cup
        C\left(D_2(v_T),\neg R^{(v_T)}_{D_2},
        i-j\right) \cup C\left(D_3(v_T), \neg R^{(v_T)}_{D_3},
        k-i\right) \right)$\\
        $A(v_T, (D_1, D_2), i) := (j, i -j)$ \\
       }
  \Return{A}
\caption{\algo{update}: Local Update Subroutine}
\label{algo:update}
\end{algorithm}

\paragraph{Generalizing \algo{DPIM}:} Given that we have a local update, we
need only define: (a) an initial configuration of $A$, (b) a sequence of
updates, and (c) a terminating condition (if the sequence is infinite).
For example, our  \algo{DPIM} is a special case where every entry in $A$ is $(0, 0)$
initially, and then we update $A(v_T, (L,R), \cdot)$ by going up the tree.

\begin{algorithm}[h]\label{alg}
\SetAlgoNoLine
\KwIn{$I = (G = (V, E), T = (V_T, E_T), \sigma(\cdot), k)$} 
\KwOut{$S$, a set of $k$ nodes of $V$}
Let $A(\cdot) = (0, 0)$ \\
\For{each height $i = 1, 2, \ldots h$} {
  \For{each node $v_T \in V_T$ with height $i$} {
      $\algo{update}(I, A, v_T, (L, R));$\\
  }
}
Let $S, S' := C(r_T, (L, R), k)$;\\
\Do{$\sigma(S') > \sigma(S)$}{
  S := S';\\
  \For{each height $i = h-1, \ldots, 1$} {
    \For{each node $v_T \in V_T$ with height $i$} {
      $\algo{update}(I, A, v_T, (L, U))$;\\
      $\algo{update}(I, A, v_T, (R, U))$;
    }
  }
  \For{each height $i = 1, 2, \ldots h$} {
    \For{each node $v_T \in V_T$ with height $i$} {
      $\algo{update}(I, A, v_T, (L, R));$\\
    }
  }
  $S' = C(r_T, (L, R), k)$;
}
\Return{$C(r_T, (L, R), k)$}
\caption{\algo{MPA}: Message Passing Algorithm}
\label{algo:mpa}
\end{algorithm}

In Algorithm~\ref{algo:mpa}, we give a possible schedule where we go up
and down the tree until no additional improvements are found. Given a graph, decomposition,
influence function, $k$, and current allocation $A$, we run \algo{DPIM} to
initialize $A$.  Then, we evaluate each node on each level of $T$ down and up
the tree, finishing at the root node, checking whether there has been any
improvements in the influence of the seed sets, and repeating until there are no
more improvements. Note that this algorithm must terminate, since requires improvement in each round and there are only a finite number
of possible configurations.  

\paragraph{Intuition:}  When \algo{DPIM} sets $A(v_T, (L, R), \ell)$ for $\ell <
k$, it does not know where the other seeds (outside $T(v_T)$ will be placed.
Thus, a better decision may be available with this added information.  How
should an algorithm decide how to place elements in $V \setminus T(v_T)$?  If we
only know $A(\cdot, (L, R), \cdot)$, this does not appear to be enough
information.  At first blush, we must know how many seeds to allocate to the
other subtree of $U(v_T)$, and how many to send further up the tree.  This is
exactly what $A(U(v_T), \neg B^{v_T}_U, \cdot)$ tells us!  Of course, all the
$A(v_T, \cdot, \cdot)$'s seem interdependent.  Thus, we create a local update algorithm to gradually refine each of them.

\subsection{Message Passing Algorithm Evaluation}

We evaluate \algo{MPA} by comparing it's effectiveness against \algo{DPIM} for
all four cascades.  Due to the \emph{significant time complexity} of \algo{MPA}, we were
only able to simulate \algo{MPA} on two small
networks --- we sampled from the directed $(d, l, t)$-hierarchical network model
using parameters $(8, 15, 15)$ and have a smaller ego-network from Facebook that
has 150 nodes and 3,386 edges.  The results from these simulations can be seen
in Figures \ref{fig:mpa_eval} and \ref{fig:mpa_eval_rw}.  Notice that we have
plots for $k = 5, 10, 15, 20$, since \algo{MPA} optimizes for the
$k$ exactly and does not guarentee that is is finding the best seeds for seeds
sets of size $1,\ldots,k-1$. This way we get to see how effective \algo{MPA} is
at multiple sizes of seed sets.

As can be seen in Figures \ref{fig:mpa_eval} and \ref{fig:mpa_eval_rw}, the improvement that \algo{MPA} has
over \algo{DPIM} is marginal for the ICM, LTM, and DICM cascades (averageing
improvements of 3\%, 1\%, 3\%, respectively).  On the other
hand, \algo{MPA} finds a seed set that has significantly more influence when the
cascade is SCM (with an average improvement of 16\%).  It seems like for an algorithm to find a high-quality seed set
for the SCM cascade, the algorithm must be able to find intercommunity synergies
which are the only way that SCM will propogate throughout networks.


  \begin{figure*}[hbtp]
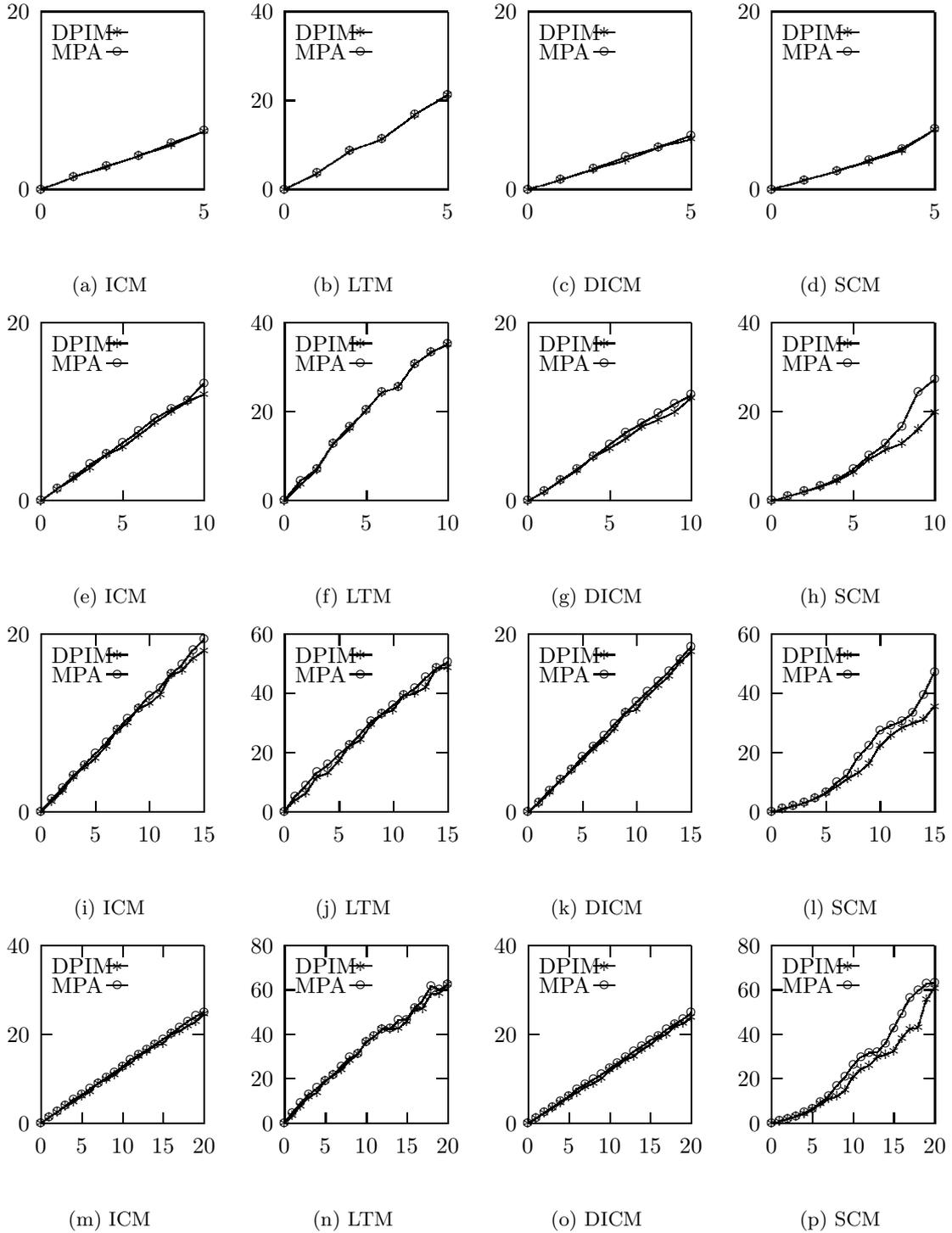

    \begin{subfigure}{0.24\textwidth}
      \centering
      \input{fig/mpa.IndependentCascade.syn8d15c15a-txt.5}
      \caption{ICM}
    \end{subfigure}
    \begin{subfigure}{0.24\textwidth}
      \centering
      \input{fig/mpa.LinearThreshold.syn8d15c15a-txt.5}
      \caption{LTM}
    \end{subfigure}
    \begin{subfigure}{0.24\textwidth}
      \centering
      \input{fig/mpa.DeflatedIndependentCascade.syn8d15c15a-txt.5}
      \caption{DICM}
    \end{subfigure}
    \begin{subfigure}{0.24\textwidth}
      \centering
      \input{fig/mpa.GoyalContagion.syn8d15c15a-txt.5}
      \caption{SCM}
    \end{subfigure}

    \begin{subfigure}{0.24\textwidth}
      \centering
      \input{fig/mpa.IndependentCascade.syn8d15c15a-txt.10}
      \caption{ICM}
    \end{subfigure}
    \begin{subfigure}{0.24\textwidth}
      \centering
      \input{fig/mpa.LinearThreshold.syn8d15c15a-txt.10}
      \caption{LTM}
    \end{subfigure}
    \begin{subfigure}{0.24\textwidth}
      \centering
      \input{fig/mpa.DeflatedIndependentCascade.syn8d15c15a-txt.10}
      \caption{DICM}
    \end{subfigure}
    \begin{subfigure}{0.24\textwidth}
      \centering
      \input{fig/mpa.GoyalContagion.syn8d15c15a-txt.10}
      \caption{SCM}
    \end{subfigure}

    \begin{subfigure}{0.24\textwidth}
      \centering
      \input{fig/mpa.IndependentCascade.syn8d15c15a-txt.15}
      \caption{ICM}
    \end{subfigure}
    \begin{subfigure}{0.24\textwidth}
      \centering
      \input{fig/mpa.LinearThreshold.syn8d15c15a-txt.15}
      \caption{LTM}
    \end{subfigure}
    \begin{subfigure}{0.24\textwidth}
      \centering
      \input{fig/mpa.DeflatedIndependentCascade.syn8d15c15a-txt.15}
      \caption{DICM}
    \end{subfigure}
    \begin{subfigure}{0.24\textwidth}
      \centering
      \input{fig/mpa.GoyalContagion.syn8d15c15a-txt.15}
      \caption{SCM}
    \end{subfigure}

    \begin{subfigure}{0.24\textwidth}
      \centering
      \input{fig/mpa.IndependentCascade.syn8d15c15a-txt.20}
      \caption{ICM}
    \end{subfigure}
    \begin{subfigure}{0.24\textwidth}
      \centering
      \input{fig/mpa.LinearThreshold.syn8d15c15a-txt.20}
      \caption{LTM}
    \end{subfigure}
    \begin{subfigure}{0.24\textwidth}
      \centering
      \input{fig/mpa.DeflatedIndependentCascade.syn8d15c15a-txt.20}
      \caption{DICM}
    \end{subfigure}
    \begin{subfigure}{0.24\textwidth}
      \centering
      \input{fig/mpa.GoyalContagion.syn8d15c15a-txt.20}
      \caption{SCM}
    \end{subfigure}

    \caption{\textbf{Evaluation of \algo{MPA} on a synthetic network.} We evaluate the effectiveness of
    \algo{MPA}
    by comparing the results to \algo{DPIM}.  The network was a sample from the
    directed $(d,l,t)$- hierarchical network model using parameters (8,15,15).
    For each plot, the x-axis is $k$ and the y-axis is the expected total
    influence from the seed set chosen by each respective algorithm.}
    \label{fig:mpa_eval}
  \end{figure*}

  \begin{figure*}[hbtp]
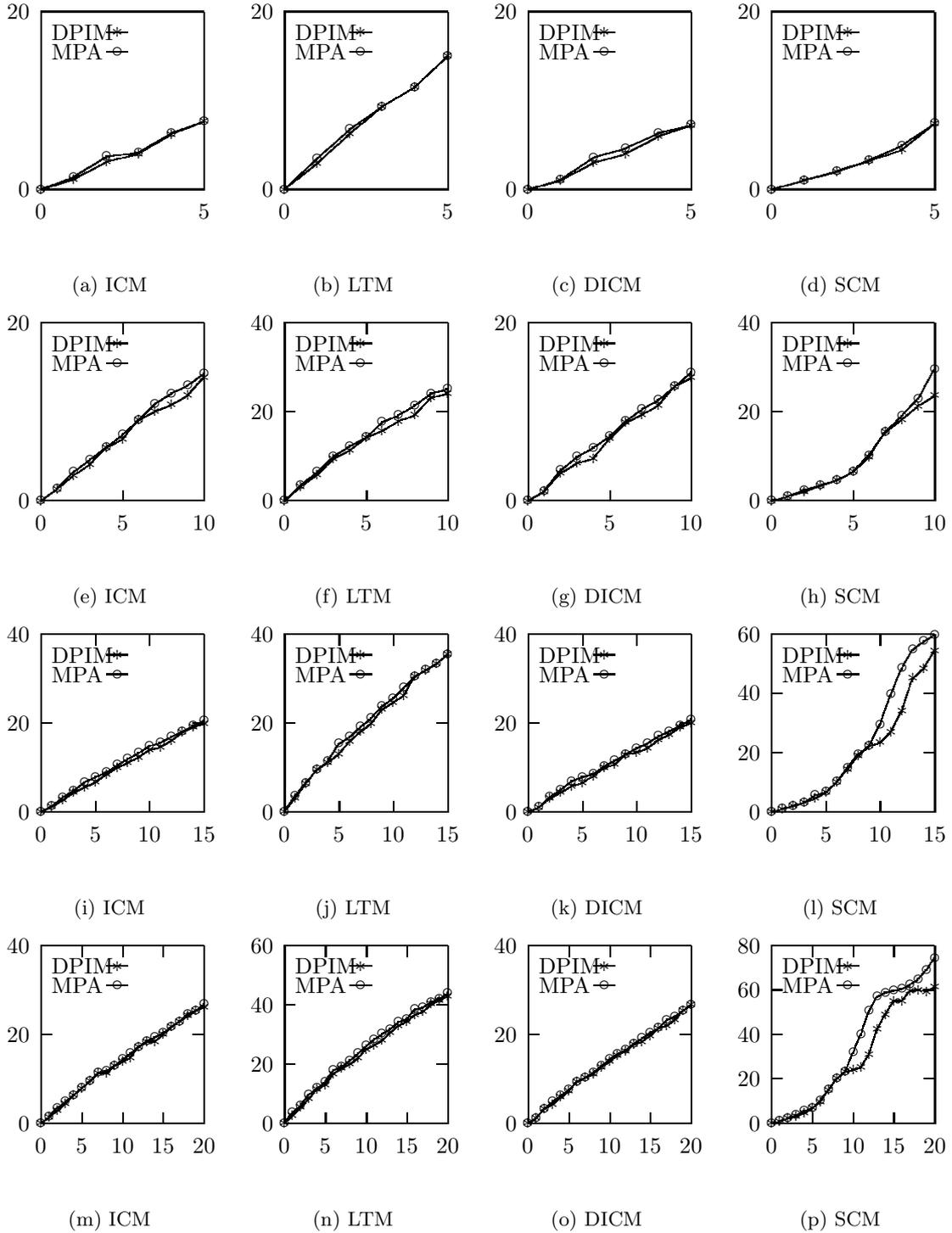

    \begin{subfigure}{0.24\textwidth}
      \centering
      \input{fig/mpa.IndependentCascade.414-edges-txt.5}
      \caption{ICM}
    \end{subfigure}
    \begin{subfigure}{0.24\textwidth}
      \centering
      \input{fig/mpa.LinearThreshold.414-edges-txt.5}
      \caption{LTM}
    \end{subfigure}
    \begin{subfigure}{0.24\textwidth}
      \centering
      \input{fig/mpa.DeflatedIndependentCascade.414-edges-txt.5}
      \caption{DICM}
    \end{subfigure}
    \begin{subfigure}{0.24\textwidth}
      \centering
      \input{fig/mpa.GoyalContagion.414-edges-txt.5}
      \caption{SCM}
    \end{subfigure}

    \begin{subfigure}{0.24\textwidth}
      \centering
      \input{fig/mpa.IndependentCascade.414-edges-txt.10}
      \caption{ICM}
    \end{subfigure}
    \begin{subfigure}{0.24\textwidth}
      \centering
      \input{fig/mpa.LinearThreshold.414-edges-txt.10}
      \caption{LTM}
    \end{subfigure}
    \begin{subfigure}{0.24\textwidth}
      \centering
      \input{fig/mpa.DeflatedIndependentCascade.414-edges-txt.10}
      \caption{DICM}
    \end{subfigure}
    \begin{subfigure}{0.24\textwidth}
      \centering
      \input{fig/mpa.GoyalContagion.414-edges-txt.10}
      \caption{SCM}
    \end{subfigure}

    \begin{subfigure}{0.24\textwidth}
      \centering
      \input{fig/mpa.IndependentCascade.414-edges-txt.15}
      \caption{ICM}
    \end{subfigure}
    \begin{subfigure}{0.24\textwidth}
      \centering
      \input{fig/mpa.LinearThreshold.414-edges-txt.15}
      \caption{LTM}
    \end{subfigure}
    \begin{subfigure}{0.24\textwidth}
      \centering
      \input{fig/mpa.DeflatedIndependentCascade.414-edges-txt.15}
      \caption{DICM}
    \end{subfigure}
    \begin{subfigure}{0.24\textwidth}
      \centering
      \input{fig/mpa.GoyalContagion.414-edges-txt.15}
      \caption{SCM}
    \end{subfigure}

    \begin{subfigure}{0.24\textwidth}
      \centering
      \input{fig/mpa.IndependentCascade.414-edges-txt.20}
      \caption{ICM}
    \end{subfigure}
    \begin{subfigure}{0.24\textwidth}
      \centering
      \input{fig/mpa.LinearThreshold.414-edges-txt.20}
      \caption{LTM}
    \end{subfigure}
    \begin{subfigure}{0.24\textwidth}
      \centering
      \input{fig/mpa.DeflatedIndependentCascade.414-edges-txt.20}
      \caption{DICM}
    \end{subfigure}
    \begin{subfigure}{0.24\textwidth}
      \centering
      \input{fig/mpa.GoyalContagion.414-edges-txt.20}
      \caption{SCM}
    \end{subfigure}

    \caption{\textbf{Evaluation of \algo{MPA} on an ego-network.} We evaluate the effectiveness of
    \algo{MPA}
    by comparing the results to \algo{DPIM}.  The network is another ego-network
    from Facebook that has 150 nodes and 3,386 edges.
    For each plot, the x-axis is $k$ and the y-axis is the expected total
    influence from the seed set chosen by each respective algorithm.}
    \label{fig:mpa_eval_rw}
  \end{figure*}

%
%
%
\section{Conclusions}
\label{sec:future}
We have given a heuristic which exploits the hierarchical community structure of
networks to find influential seed sets.  We have shown, using both real-world
and synthetic networks, that our algorithm outperforms the state of the art,
with large gains for non-submodular influence maximization.  We have also
exhibited ``worst-case" theoretical instances where our algorithm produces sets
that are $\Theta(\sqrt{n})$ more influential.  Lastly, we have generalized our
heuristic to a message passing algorithm.


One possible direction of future exploration is to try additional hierarchical decomposition techniques.  Interestingly, \algo{DPIM} can be seen as a way to \emph{test} hierarchical decomposition techniques.  Decompositions that perform better are intuitively finding a better decomposition.

\algo{DPIM} is typically not as fast as naive greedy, and to be useful in practice, it would greatly help if it were more scalable.  We believe that this will prove to be the case.  For example, we could stop the recursion before exploring the entire hierarchical decomposition.  We might stop dividing if a subtree does not appear to have any additional community structure, and then run a heuristic (such a degree or greedy) to process the rest of the subtree.  The intuition here is that dynamic programming works best where the network has strong community structure, so where no structure exists, it may not provide much added benefit.  Additionally, the same techniques that have made the greedy algorithm more scalable might be adopted to our dynamic programming and message passing frameworks.

%
%

%
\bibliography{grant}

\begin{thebibliography}{10}

\bibitem{AroraGSS12}
Sanjeev Arora, Rong Ge, Sushant Sachdeva, and Grant Schoenebeck.
\newblock Finding overlapping communities in social networks: Toward a rigorous
  approach.
\newblock In {\em ACM EC '12}, 2012.

\bibitem{Arthur89}
W.~B. Arthur.
\newblock Competing technologies, increasing returns, and lock-in by historical
  events.
\newblock {\em Economic Journal}, 99(394):pp. 116--131, 1989.

\bibitem{BackstromHKL06}
Lars Backstrom, Daniel~P. Huttenlocher, Jon~M. Kleinberg, and Xiangyang Lan.
\newblock Group formation in large social networks: membership, growth, and
  evolution.
\newblock In {\em KDD '06}, pages 44--54, 2006.

\bibitem{BanerjeeCDJ13}
Abhijit Banerjee, Arun~G Chandrasekhar, Esther Duflo, and Matthew~O Jackson.
\newblock The diffusion of microfinance.
\newblock {\em Science}, 341(6144), 2013.

\bibitem{Bharathi2007}
Shishir Bharathi, David Kempe, and Mahyar Salek.
\newblock Competitive influence maximization in social networks.
\newblock In {\em WINE '07}, pages 306--311, Berlin, Heidelberg, 2007.
  Springer-Verlag.

\bibitem{borgs2012influence}
Christian Borgs, Michael Brautbar, Jennifer~T Chayes, and Brendan Lucier.
\newblock Maximizing social influence in nearly optimal time.
\newblock {\em SODA '14}, 2014.

\bibitem{Brown87}
J.~J. Brown and P.~H. Reingen.
\newblock Social ties and word-of-mouth referral behavior.
\newblock {\em Journal of Consumer Research}, 14:350--362, 1987.

\bibitem{Centola10}
D.~Centola.
\newblock The spread of behavior in an online social network experiment.
\newblock {\em Science}, 329(5996):1194--1197, 2010.

\bibitem{CentolaM07}
Damon Centola and Michael Macy.
\newblock Complex contagions and the weakness of long ties.
\newblock {\em American Journal of Sociology}, 2007.

\bibitem{chen2012time}
Wei Chen, Wei Lu, and Ning Zhang.
\newblock Time-critical influence maximization in social networks with
  time-delayed diffusion process.
\newblock {\em AAAI '12}, 2012.

\bibitem{chen2009efficient}
Wei Chen, Yajun Wang, and Siyu Yang.
\newblock Efficient influence maximization in social networks.
\newblock In {\em KDD '09}. ACM, 2009.

\bibitem{ChenYZ10}
Wei Chen, Yifei Yuan, and Li~Zhang.
\newblock Scalable influence maximization in social networks under the linear
  threshold model.
\newblock In {\em ICDM '10}, pages 88--97. IEEE, 2010.

\bibitem{clausetNM2004finding}
Aaron Clauset, Mark~EJ Newman, and Cristopher Moore.
\newblock Finding community structure in very large networks.
\newblock {\em Physical review E}, 70(6):066111, 2004.

\bibitem{cohen2014sketch}
Edith Cohen, Daniel Delling, Thomas Pajor, and Renato~F Werneck.
\newblock Sketch-based influence maximization and computation: Scaling up with
  guarantees.
\newblock In {\em CIKM '14}, pages 629--638. ACM, 2014.

\bibitem{ColemanKM57}
J.~Coleman, E.~Katz, and H.~Menzel.
\newblock The diffusion of an innovation among physicians.
\newblock {\em Sociometry}, 20:253--270, 1957.

\bibitem{ConleyU10}
T.~G. Conley and C.~R. Udry.
\newblock Learning about a new technology: Pineapple in {Ghana}.
\newblock {\em American Economic Review}, 100(1):35--69, 2010.

\bibitem{Dasgupta2016}
Sanjoy Dasgupta.
\newblock A cost function for similarity-based hierarchical clustering.
\newblock In {\em STOC '16}, pages 118--127, New York, NY, USA, 2016. ACM.

\bibitem{Domingos01}
P.~Domingos and M.~Richardson.
\newblock Mining the network value of customers.
\newblock In {\em 7th ACM SIGKDD International Conference on Knowledge
  Discovery and Data Mining}, pages 57--66, 2001.

\bibitem{goyal2011celf++}
Amit Goyal, Wei Lu, and Laks~VS Lakshmanan.
\newblock Celf++: optimizing the greedy algorithm for influence maximization in
  social networks.
\newblock In {\em WWW '11}, pages 47--48. ACM, 2011.

\bibitem{goyal2011simpath}
Amit Goyal, Wei Lu, and Laks~VS Lakshmanan.
\newblock Simpath: An efficient algorithm for influence maximization under the
  linear threshold model.
\newblock In {\em ICDM '11}, pages 211--220. IEEE, 2011.

\bibitem{GoyalK12}
Sanjeev Goyal and Michael Kearns.
\newblock Competitive contagion in networks.
\newblock In {\em STOC '12}, pages 759--774, 2012.

\bibitem{Granovetter78}
Mark Granovetter.
\newblock Threshold models of collective behavior.
\newblock {\em American Journal of Sociology}.

\bibitem{METIS}
George Karypis and Vipin Kumar.
\newblock {MeTis: Unstructured Graph Partitioning and Sparse Matrix Ordering
  System, Version 4.0}.
\newblock \url{http://www.cs.umn.edu/~metis}, 2009.

\bibitem{KempeKT03}
David Kempe, Jon~M. Kleinberg, and {\'E}va Tardos.
\newblock Maximizing the spread of influence through a social network.
\newblock In {\em KDD '03}, pages 137--146, 2003.

\bibitem{KempeKT05}
David Kempe, Jon~M. Kleinberg, and {\'E}va Tardos.
\newblock Influential nodes in a diffusion model for social networks.
\newblock In {\em ICALP '05}, pages 1127--1138, 2005.

\bibitem{kimura2006tractable}
Masahiro Kimura and Kazumi Saito.
\newblock Tractable models for information diffusion in social networks.
\newblock In {\em PKDD '06}. 2006.

\bibitem{Kleinberg2002hierarchical}
Jon Kleinberg.
\newblock Small-world phenomena and the dynamics of information.
\newblock {\em NIPS '02}, 1:431--438, 2002.

\bibitem{LermanG10}
K.~Lerman and R.~Ghosh.
\newblock Information contagion: An empirical study of the spread of news on
  {Digg} and {Twitter} social networks.
\newblock In {\em ICWSM '10}, pages 90--97, 2010.

\bibitem{LeskovecAH06}
Jure Leskovec, Lada~A. Adamic, and Bernardo~A. Huberman.
\newblock The dynamics of viral marketing.
\newblock In {\em ACM EC '06}, pages 228--237, 2006.

\bibitem{leskovec2007cost}
Jure Leskovec, Andreas Krause, Carlos Guestrin, Christos Faloutsos, Jeanne
  VanBriesen, and Natalie Glance.
\newblock Cost-effective outbreak detection in networks.
\newblock In {\em KDD '07}, pages 420--429. ACM, 2007.

\bibitem{snapnets}
Jure Leskovec and Andrej Krevl.
\newblock {SNAP Datasets}: {Stanford} large network dataset collection.
\newblock \url{http://snap.stanford.edu/data}, June 2014.

\bibitem{liu2012time}
Bo~Liu, Gao Cong, Dong Xu, and Yifeng Zeng.
\newblock Time constrained influence maximization in social networks.
\newblock In {\em ICDM '12}, pages 439--448. IEEE, 2012.

\bibitem{liu2014influence}
Bo~Liu, Gao Cong, Yifeng Zeng, Dong Xu, and Yeow~Meng Chee.
\newblock Influence spreading path and its application to the time constrained
  social influence maximization problem and beyond.
\newblock {\em IEEE, Knowledge and Data Engineering}, 2014.

\bibitem{lucier2015influence}
Brendan Lucier, Joel Oren, and Yaron Singer.
\newblock Influence at scale: Distributed computation of complex contagion in
  networks.
\newblock In {\em KDD '15}, pages 735--744. ACM, 2015.

\bibitem{Morris00}
S.~Morris.
\newblock Contagion.
\newblock {\em Review of Economic Studies}, 67(1):57--78, 2000.

\bibitem{MosselR10}
Elchanan Mossel and S{\'e}bastien Roch.
\newblock Submodularity of influence in social networks: From local to global.
\newblock {\em SIAM J. Comput.}, 39(6):2176--2188, 2010.

\bibitem{nguyen2012influence}
Huy Nguyen and Rong Zheng.
\newblock Influence spread in large-scale social networks--a belief propagation
  approach.
\newblock In {\em Machine Learning and Knowledge Discovery in Databases}, pages
  515--530. Springer, 2012.

\bibitem{PaluckSA2016}
Elizabeth~Levy Paluck, Hana Shepherd, and Peter~M. Aronow.
\newblock Changing climates of conflict: A social network experiment in 56
  schools.
\newblock {\em Proceedings of the National Academy of Sciences},
  113(3):566--571, 2016.

\bibitem{RichardsonD02}
M.~Richardson and P.~Domingos.
\newblock Mining knowledge-sharing sites for viral marketing.
\newblock In {\em KDD '12}, pages 61--70, 2002.

\bibitem{RomeroMK11}
Daniel~M Romero, Brendan Meeder, and Jon Kleinberg.
\newblock Differences in the mechanics of information diffusion across topics :
  Idioms , political hashtags , and complex contagion on twitter.
\newblock In {\em WWW '11}, pages 695--704. ACM, 2011.

\bibitem{seeman2013adaptive}
Lior Seeman and Yaron Singer.
\newblock Adaptive seeding in social networks.
\newblock In {\em FOCS '13}, pages 459--468. IEEE, 2013.

\bibitem{tang2014influence}
Youze Tang, Xiaokui Xiao, and Yanchen Shi.
\newblock Influence maximization: Near-optimal time complexity meets practical
  efficiency.
\newblock In {\em Proceedings of the 2014 ACM SIGMOD international conference
  on Management of data}, pages 75--86. ACM, 2014.

\bibitem{tullock1980}
Gordon Tullock.
\newblock {\em Towards a theory of the rent-seeking society}, chapter Efficient
  Rent Seeking.
\newblock Texas A\&M University Press, 1980.

\bibitem{Watts02}
D.~J. Watts.
\newblock A simple model of global cascades on random networks.
\newblock {\em Proceedings of the National Academy of Sciences},
  99(9):5766--5771, 2002.

\end{thebibliography}
\bibliographystyle{plain}

\end{document}